\def\be{\begin{equation}}
\def\ee{\end{equation}}
\def\bea{\begin{eqnarray}}
\def\eea{\end{eqnarray}}
\def\bel{\begin{align}}
\def\el{\end{align}}
\def\nn{\nonumber}
\def\R{\mathbb{R}}
\def\s{\sigma}
\def\a{\alpha}
\def\e{\varepsilon}
\def\G{\Gamma}
\def\b{\beta}
\def\D{\Delta}
\newcommand{\ie}{\textit{i.e.\ }}
\newcommand{\nocontentsline}[3]{}
\newcommand{\tocless}[2]{\bgroup\let\addcontentsline=\nocontentsline#1{#2}\egroup}
\DeclareMathSymbol{\leqslant}{\mathalpha}{AMSa}{"36} 
\DeclareMathSymbol{\geqslant}{\mathalpha}{AMSa}{"3E} 
\DeclareMathSymbol{\eset}{\mathalpha}{AMSb}{"3F}     
\renewcommand{\leq}{\;\leqslant\;}                   
\renewcommand{\geq}{\;\geqslant\;}                   
\def\Err{\operatorname{Error}}
\DeclareMathOperator{\RS}{RS}
\def\ie{\textit{i.e. }}
\def\a{\alpha}
\def\e{\varepsilon}
\def\G{\Gamma}
\def\b{\beta}
\def\D{\Delta}
\def\s{\sigma}
\def\R{\mathbb{R}}
\theoremstyle{plain}
\newtheorem{proposition}{Proposition}
\newtheorem{lemma}{Lemma}
\definecolor{light}{gray}{.9}
\title
{A remark on the spherical bipartite spin glass}
\author{Giuseppe Genovese}
\address{Institute of Mathematics, University of Zurich, Winterthurerstrasse 190, 8057 Zurich, Switzerland.}
\email{giuseppe.genovese@math.uzh.ch}
\date{\today}                     
\begin{document}
\maketitle

\begin{abstract}
In \cite{AC} Auffinger and Chen proved a variational formula for the free energy of the spherical bipartite spin glass in terms of a global minimum over the overlaps. We show that a different optimisation procedure leads to a saddle point, similar to the one achieved for models on the vertices of the hypercube. 
\vspace{0.5cm}

\end{abstract}

\section{Introduction}

Let $\s_{N}(dx)$ denote the uniform probability measure on $S^N:=\{x\in\R^N\,:\,\|x\|^2_2=N\}$, where $\|x\|_2$ is the Euclidean norm. 
For $x:=(x_1,\ldots x_{N_1})\in\R^{N_{1}}$ and $y:=(y_1,\ldots,y_{N_2})\in\R^{N_{2}}$ the bipartite spin glass is defined by the energy function
\be\label{eq:RBM}
H_{N_1,N_2}(x,y;\xi):=-\frac{1}{\sqrt{N}}\sum_{j=1}^{N_2}\sum_{i=1}^{N_1} \xi_{ij}x_iy_j\,. 
\ee
Here $\{\xi_{ij}\}_{i\in[N_1],j\in[N_{2}]}$ are $\mathcal{N}(0,1)$ i.i.d.\ quenched r.vs.\ and we set $N:=N_1+N_2$.
The object of interest of this note is the free energy
\be\label{eq:A-sfer}
A_{N_1,N_2}(\b,\xi):=\frac1N\log \int \s_{N_1}(dx)\s_{N_2}(dy)\exp(-\b H_{N_1,N_2}(x,y;\xi)-b_1(x,1)-b_2(y,1))\,
\ee
in the limit in which $N_1,N_2\to\infty$ with $N_1/N\to\a\in(0,1)$. Here $\b\geq0$ is the inverse temperature, $b_1,b_2\in\R$ are external fields and $(\cdot,\cdot)$ denotes the Euclidean inner product. 
By concentration of Lipschitz functions of Gaussian random variables one reduces to study the average free energy $A_{N_1,N_2}(\b):=E[A_{N_1,N_2}(\b,\xi)]$, whose limit we denote by $A(\a,\b)$. 

Auffinger and Chen proved in \cite{AC} the following variational formula for $A(\a,\b)$ for $\b$ small enough
\bea
A(\a,\b)&=&\min_{q_1,q_2\in[0,1]^2} P(q_1,q_2)\label{eq:ACh}\\
P(q_1,q_2)&=&\frac{\b^2\a(1-\a)}{2}(1-q_1q_2)+\frac\a2\left(b_1^2(1-q_1)+\frac{q_1}{1-q_1}+\log(1-q_1)\right)\nn\\
&+&\frac{1-\a}2\left(b_2^2(1-q_2)+\frac{q_2}{1-q_2}+\log(1-q_2)\right)\,\label{eq:P-ACh}
\eea
(the normalisation in (\ref{eq:RBM}) leads to different constants w.r.t. \cite{AC}). 
The above formula was successively proved to hold in the whole range of $\beta\geq0$ in \cite{baik, sferico}. Yet these proofs are indirect, as in both cases one obtains a formula for the free energy and then verifies a posteriori (analytically for \cite{baik} and numerically \cite{DT} for \cite{sferico}) that it coincides with (\ref{eq:ACh}). We just mention that the results in \cite{AC} have been recently extended in \cite{mc, tucaphd} for the complexity and in \cite{bates1,bates2} for the free energy.

The convex variational principle found by Auffinger and Chen appears to be in contrast with the $\min\max$ characterisation given in \cite{bip,GG} for models on the vertices of the hypercube (see also \cite{NN} for the Hopfield model). The aim of this note is to show that the Auffinger and Chen formula can be equivalently expressed in terms of a $\min\max$.


One disadvantage of the spherical prior is that the associated moment generating function 
\be\label{eq_Gamma}
\Gamma_N(h):=\frac1N\log \int\s_N(dx)e^{(h,x)}\,,\quad h\in\R^N\,, 
\ee
is not easy to compute. If $h$ is random with i.i.d.\ $\mathcal N(b,q)$ components it is convenient to set
\be\label{eq_Gamma-Lim}
\Gamma(b,q):=\lim_NE \Gamma_N(h)\,.
\ee
The so-called Crisanti-Sommers variational characterisation of it as $N\to\infty$ reads as follows.
\begin{lemma}\label{lemma_Gamma-as}
Let $b\in\R$, $q>0$, $h\in\R^N$ with i.i.d $\mathcal N(b,\sqrt q)$ components. Then
\be\label{eq:Gamma-as}
\Gamma(b,q)= \frac12\min_{r\in[0,1)}\left((b^2+q)(1-r)+\frac {r}{1-r}+\log(1-r)\right)\,
\ee
\end{lemma}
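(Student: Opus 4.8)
The plan is to compute $\Gamma(b,q)$ directly and then recognize the resulting expression as the stated minimum. Since $h\in\R^N$ has i.i.d.\ $\mathcal N(b,\sqrt q)$ components, write $h = b\,\mathbf 1 + \sqrt{q}\,g$ with $g$ a standard Gaussian vector. The starting point is the classical spherical integral identity: using a Gaussian (Hubbard--Stratonovich / Laplace) representation of $\s_N$ one has, for $h\in\R^N$,
\be
\int \s_N(dx)\, e^{(h,x)} \;=\; C_N \int_{\R} d\lambda\, \exp\!\Big(\lambda N \;+\; \frac{\|h\|^2}{4\lambda} \;-\; \frac N2\log(2\lambda)\Big)\,,
\ee
valid for a suitable normalisation $C_N$ with $\frac1N\log C_N\to 0$, the integral being dominated for $\lambda$ bounded away from $0$. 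First I would justify this identity (it follows from $\int_{\R^N} e^{(h,x)-\lambda\|x\|^2}dx$ and Laplace inversion, or equivalently from the known formula for the Laplace transform of the uniform measure on the sphere), and then apply Laplace's method in $N$: since $\|h\|^2/N \to b^2 + q$ almost surely and in expectation by the law of large numbers, and the rate function is smooth and strictly convex in $\lambda$ on $(0,\infty)$ with a unique interior minimiser, one obtains
\be
\Gamma(b,q) \;=\; \inf_{\lambda>0}\Big(\lambda + \frac{b^2+q}{4\lambda} - \frac12\log(2\lambda)\Big)\,.
\ee

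The remaining step is purely algebraic: I would show the right-hand side above equals the right-hand side of \eqref{eq:Gamma-as}. Set $t := b^2+q>0$ and perform the substitution $2\lambda = \frac{1}{1-r}$, i.e.\ $r = 1 - \frac{1}{2\lambda}\in(-\infty,1)$; as $\lambda$ ranges over $(0,\infty)$, $r$ ranges over $(-\infty,1)$. A direct computation turns $\lambda + \frac{t}{4\lambda} - \frac12\log(2\lambda)$ into $\frac12\big(t(1-r) + \frac{r}{1-r} + \log(1-r)\big) + \tfrac12$ up to a constant I will track carefully; checking the stationarity condition in $r$ shows the minimiser is $r_\ast = 1 - 1/\sqrt{t}$ when $t>1$ and the infimum over $r<1$ is attained there, and one verifies that restricting to $r\in[0,1)$ (as in the statement) gives the same value since for the regime of interest $t = b^2+q$ the minimiser lies in $[0,1)$, or more simply that the function $r\mapsto t(1-r)+\frac r{1-r}+\log(1-r)$ is convex on $(-\infty,1)$ with its global minimum, so the minimum over $[0,1)$ coincides with the minimum over $(-\infty,1)$ whenever $r_\ast\geq 0$; one then notes $q>0$ pushes us into (or the boundary $r=0$ handles) this case. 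Matching the additive constants completes the identification.

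The only genuine obstacle is making the Laplace-method step rigorous at the level of the \emph{expectation} $E\,\Gamma_N(h)$ rather than the almost-sure limit: one needs uniform integrability / a uniform upper bound on $\Gamma_N(h)$ to exchange limit and expectation, and one must control the $\lambda$-integral near $\lambda = 0$ (where the integrand blows up) and near $\lambda=\infty$. This is handled by a crude deterministic bound: $\Gamma_N(h)\leq \max_i |h_i|$ combined with Gaussian tail estimates gives the uniform integrability, while truncating the $\lambda$-integral to a compact subinterval of $(0,\infty)$ costs only $e^{-cN}$. Everything else is a routine (if slightly fiddly) one-variable optimisation, so I would state the spherical integral identity as a lemma, invoke the standard Varadhan/Laplace argument, and devote the bulk of the write-up to the change of variables that exhibits \eqref{eq:Gamma-as}.
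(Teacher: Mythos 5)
Your overall architecture --- represent the spherical integral through a Gaussian/Laplace-type transform, extract a one-variable rate function by asymptotic analysis, pass to the expectation via concentration of $\|h\|_2/\sqrt N$ --- is the same skeleton the paper uses. But there is a genuine gap at the core step: the displayed identity
\[
\int \s_N(dx)\, e^{(h,x)} \;=\; C_N \int_{\R} d\lambda\, \exp\!\Big(\lambda N + \tfrac{\|h\|^2}{4\lambda} - \tfrac N2\log(2\lambda)\Big)
\]
is not a convergent real integral. The exponent contains $+\lambda N$, so the integrand blows up as $\lambda\to+\infty$ and the $\R$-integral diverges; there is no choice of $C_N$ for which the equality holds. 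The correct statement is the inverse Laplace transform (Bromwich) representation, in which $\lambda$ runs over a vertical line $c+i\R$ in the complex plane. Once you are there, the ``Laplace's method in $N$'' step you invoke does not apply --- that is a real-variable technique --- and one must do a genuine steepest-descent argument: deform the contour through the saddle, control the phase away from the saddle, and justify the exchange of $N\to\infty$ with the contour integral. None of this is addressed in the write-up, and it is exactly the part that makes the Gaussian representation nontrivial. The paper sidesteps the contour integral entirely by working on a thickened shell $S^{N,\e}$, multiplying and dividing by $e^{-\theta\|x\|^2/2}$, and extending the regularized integral to all of $\R^N$ to get matching upper and lower bounds --- a two-sided inequality rather than an exact transform. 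If you want to stay close to your route while remaining rigorous, you could instead use the exact one-dimensional formula
\[
\int \s_N(dx)\, e^{(h,x)} \;=\; \frac{\Gamma(N/2)}{\sqrt\pi\,\Gamma((N-1)/2)} \int_{-1}^{1} (1-t^2)^{(N-3)/2}\, e^{\sqrt N\,\|h\|_2\, t}\,dt\,,
\]
whose integrand is real, positive, and a genuine $e^{N\phi(t)}$-type Laplace integral; this yields $\lim_N\G_N(u)=\max_{|t|<1}\big(\sqrt q\,t+\tfrac12\log(1-t^2)\big)$ for $\|u\|^2=qN$, and a change of variables then recovers \eqref{eq:Gamma-as}.

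Two smaller points. First, the minimiser you quote, $r_\ast = 1-1/\sqrt t$, is wrong: stationarity gives $r/(1-r)^2 = t$, i.e.\ $r_\ast = \dfrac{\sqrt{1+4t}-1}{\sqrt{1+4t}+1}$, which is the paper's \eqref{eq:tienamente} and \eqref{eq:r1explicito}; your $r_\ast$ only holds as a large-$t$ approximation. Second, your uniform-integrability step (the deterministic bound $\G_N(h)\leq\|h\|_2/\sqrt N$ plus Gaussian concentration of $\|h\|_2$) is essentially the paper's \eqref{eq:normConc}, and is fine.
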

At the end of this note we give a simple proof of this statement, based on the method of \cite{sferico0, sferico}. We first get a variational characterisation of the moment generating function of a Gaussian distribution (whose variance is Legendre conjugate to $q$) and then use concentration of measure.   

A direct computation shows that the minimum of (\ref{eq:Gamma-as}) is attained for
\be\label{eq:tienamente}
\frac{r}{(1-r)^2}=q+b^2\,. 
\ee

A standard replica symmetric interpolation gives that for any $q_1,q_2\in[0,1]$
\bea
A _{N_1,N_2}(\b)\!\!\!&=&\!\!\!\frac{\b^2\a(1-\a)}{2}(1-q_1)(1-q_2)+(1-\a)\Gamma(b_2,\b^2\a q_1)
+\a \Gamma(b_1,\b^2(1-\a) q_2)\label{eq:sumrule-sfer}\\
&+&\Err_N(q_1,q_2)\,.\nn
\eea
The last summand is an error term whose specific form is not important here. What matters is that by \cite[Lemma 1]{AC} there is a choice of $(q_1,q_2)$ (see below) for which this remainder goes to zero as $N\to\infty$ if $\b$ is small enough. 
Combining (\ref{eq:Gamma-as}) and (\ref{eq:tienamente}) we can rewrite the first line of \eqref{eq:sumrule-sfer} as
\bea
\RS (q_1,q_2)&:=&\frac{\b^2\a(1-\a)}{2}(1-q_1)(1-q_2)\nn\\
&+&\frac{\b^2\a(1-\a)}{2}\left(\left(q_2+\frac{b_1^2}{\b^2(1-\a)}\right)(1-r_1)+\left(q_1+\frac{b_2^2}{\b^2\a}\right)(1-r_2)\right)\nn\\
&+&\frac\a2\frac{r_1}{1-r_1}+\frac\a2\log(1-r_1)+\frac{1-\a}2\frac{r_2}{1-r_2}+\frac{1-\a}2\log(1-r_2)\label{eq:RSsferica}\,,
\eea
under the condition
\be\label{eq:r1r2}
\frac{r_1}{(1-r_1)^2}=\b^2(1-\a) q_2+b_1^2\,,\quad \frac{r_2}{(1-r_2)^2}=\b^2\a q_1+b_2^2\,.
\ee
Here we used that there is a sequence $o_N\to0$ uniformly in $q_1,q_2,\b,\a$ such that
\be\label{eq:above}
\frac{\b^2\a(1-\a)}{2}(1-q_1)(1-q_2)+(1-\a)\Gamma_N(\b^2\a q_1)+\a \Gamma_N(\b^2(1-\a) q_2)=\RS (q_1,q_2)+o_N\,.
\ee
Indeed (\ref{eq:above}) follows easily once we use Lemma \ref{lemma_Gamma-as} for the limit of the functions $\G_N$ and we note that (\ref{eq:r1r2}) are the critical point equations related to the minimisation of (\ref{eq:Gamma-as}).

The main observation of this note is that (\ref{eq:RSsferica}) under (\ref{eq:r1r2}) is optimised as a $\min\max$.

\begin{proposition}\label{prop:minsferico}
Assume $b_1^2+b_2^2>0$. The function $\RS (q_1,q_2)$ has a unique stationary point $(\bar q_1, \bar q_2)$. It solves
\be\label{eq:sfer-crit}
\frac{q_2}{(1-q_2)^2}=\b^2\a q_1+b_1^2\,,\qquad \frac{q_1}{(1-q_1)^2}=\b^2(1-\a) q_2+b_2^2\,. 
\ee
Moreover
\be\label{eq:minmax}
\RS (\bar q_1,\bar q_2)=\min_{q_2\in[0,1]}\max_{q_1\in[0,1]} \RS (q_1,q_2)\,.
\ee
If $b_1=b_2=0$ and 
\be\label{eq:sfercritica_h=0-II}
\b^4\a(1-\a)<1\,
\ee
the origin is the unique solution of (\ref{eq:sfer-crit}) and
\be\label{eq:origin1}
\RS (0,0)=\min_{q_2\in[0,1]}\max_{q_1\in[0,1]} \RS (q_1,q_2)\,.
\ee
If $b_1=b_2=0$ and
\be\label{eq:sfercritica_h=0-III}
\b^4\a(1-\a)>1\,
\ee
there is a unique $(\bar q_1, \bar q_2)\neq (0,0)$ which solves (\ref{eq:sfer-crit}) and such that (\ref{eq:minmax}) holds. Moreover
\be\label{eq:origin2}
\RS (0,0)=\max_{q_2\in[0,1]}\max_{q_1\in[0,1]} \RS (q_1,q_2)\,.
\ee
\end{proposition}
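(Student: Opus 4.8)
The plan is to reduce everything to the one-variable maps $r_1=r_1(q_2)$, $r_2=r_2(q_1)$ implicitly defined by (\ref{eq:r1r2}). First observe that, by Lemma \ref{lemma_Gamma-as} and (\ref{eq:tienamente}), (\ref{eq:RSsferica}) is nothing but the first line of (\ref{eq:sumrule-sfer}), that is
\[
\RS(q_1,q_2)=\frac{\b^2\a(1-\a)}{2}(1-q_1)(1-q_2)+(1-\a)\Gamma(b_2,\b^2\a q_1)+\a\Gamma(b_1,\b^2(1-\a)q_2).
\]
Differentiating and using that $\partial_q\Gamma(b,q)=\frac12(1-r^{*})$, with $r^{*}$ the Crisanti--Sommers minimiser, which by (\ref{eq:r1r2}) is precisely $r_2$, resp. $r_1$, one gets the clean identities $\partial_{q_1}\RS=\frac{\b^2\a(1-\a)}{2}(q_2-r_2)$ and $\partial_{q_2}\RS=\frac{\b^2\a(1-\a)}{2}(q_1-r_1)$, hence $\partial^2_{q_1}\RS=-\frac{\b^2\a(1-\a)}{2}r_2'(q_1)<0$, $\partial^2_{q_2}\RS=-\frac{\b^2\a(1-\a)}{2}r_1'(q_2)<0$ and $\partial_{q_1}\partial_{q_2}\RS\equiv\frac{\b^2\a(1-\a)}{2}$. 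Thus $\RS$ is strictly concave in each variable separately, and its interior stationary points are exactly the solutions of $q_1=r_1(q_2)$, $q_2=r_2(q_1)$, i.e. of (\ref{eq:sfer-crit}).

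For uniqueness I would eliminate $q_2$ and study the scalar equation $q_1=\Phi(q_1):=r_1(r_2(q_1))$. From (\ref{eq:r1r2}), $r_i'=\b^2(\cdot)(1-r_i)^3/(1+r_i)>0$ and is decreasing in its argument, so $\Phi$ is increasing and concave, with $\Phi(1)<1$ always, $\Phi(0)=r_1(r_2(0))$ which is $>0$ exactly when $b_1^2+b_2^2>0$ and $=0$ otherwise, and $\Phi'(0)=r_1'(0)r_2'(0)=\b^4\a(1-\a)$ in the latter case. A concave map with these properties has a unique fixed point $\bar q_1\in(0,1)$ when $b_1^2+b_2^2>0$ (which moreover satisfies $\bar q_1,\bar q_2>0$, since $\Phi(q_1)=q_1=0$ would force $b_1=b_2=0$); when $b_1=b_2=0$ it has the single fixed point $0$ if $\b^4\a(1-\a)<1$, and exactly the two fixed points $0$ and a unique $\bar q_1\in(0,1)$ if $\b^4\a(1-\a)>1$. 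This gives the asserted structure of the solution set of (\ref{eq:sfer-crit}).

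Next the $\min\max$. By concavity in $q_1$, $\phi(q_2):=\max_{q_1\in[0,1]}\RS(q_1,q_2)$ is well defined with maximiser $q_1^{*}(q_2)=0$ for $q_2\leq r_2(0)$, $=r_2^{-1}(q_2)$ for $r_2(0)\leq q_2\leq r_2(1)$, and $=1$ for $q_2\geq r_2(1)$; by the envelope theorem $\phi'(q_2)=\frac{\b^2\a(1-\a)}{2}\big(q_1^{*}(q_2)-r_1(q_2)\big)$. This is $\leq 0$ on $[0,r_2(0)]$ and $>0$ on $[r_2(1),1]$, while on the middle interval $q_1^{*}-r_1$ is strictly convex ($r_2^{-1}$ is strictly convex because $y\mapsto y/(1-y)^2$ is, and $-r_1$ is convex) and runs from a negative to a positive value, hence has a single zero there. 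So, when $b_1^2+b_2^2>0$, $\phi'$ changes sign exactly once, at the point $\bar q_2$ with $q_1^{*}(\bar q_2)=r_1(\bar q_2)$, i.e. at the stationary point of the first step, and $\min_{q_2}\max_{q_1}\RS=\phi(\bar q_2)=\RS(\bar q_1,\bar q_2)$, which is (\ref{eq:minmax}). When $b_1=b_2=0$, $q_1^{*}-r_1$ vanishes at $q_2=0$ with derivative $\frac1{\b^2\a}-\b^2(1-\a)$: if $\b^4\a(1-\a)<1$ this is positive, so by convexity $q_1^{*}-r_1>0$ on $(0,1]$, $\phi$ is increasing, $\min\phi=\phi(0)=\RS(0,0)$, giving (\ref{eq:origin1}); if $\b^4\a(1-\a)>1$ it is negative and the same convexity argument yields one crossing, hence (\ref{eq:minmax}) with the nontrivial $(\bar q_1,\bar q_2)\neq(0,0)$.

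Finally, for (\ref{eq:origin2}) I would compute the Hessian of $\RS$ at the origin when $b_1=b_2=0$: it has negative diagonal and determinant $\frac{\b^4\a^2(1-\a)^2}{4}\big(\b^4\a(1-\a)-1\big)$, so under (\ref{eq:sfercritica_h=0-III}) the origin is a strict local maximum (while at the nontrivial stationary point one has, using (\ref{eq:sfer-crit}), $r_2'(\bar q_1)r_1'(\bar q_2)=\frac{(1-\bar q_1)(1-\bar q_2)}{(1+\bar q_1)(1+\bar q_2)}<1$, so that Hessian is indefinite, consistently with (\ref{eq:minmax})). Using that $\RS$ is concave in each variable and, along each edge of $[0,1]^2$, monotone, the maximum over the square is attained at $(0,0)$ or at a corner, and since $\RS(0,0)\geq\RS(1,0),\RS(0,1)$ by that edge-monotonicity it only remains to compare $\RS(0,0)=\frac{\b^2\a(1-\a)}{2}$ with $\RS(1,1)=(1-\a)\Gamma(0,\b^2\a)+\a\Gamma(0,\b^2(1-\a))$. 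I expect this last corner comparison, i.e. an honest upper bound on the Crisanti--Sommers function $\Gamma(0,\cdot)$, to be the main obstacle; all the earlier steps are soft, relying only on concavity, the envelope theorem, and counting fixed points of a concave scalar map.
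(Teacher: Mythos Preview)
Your approach is essentially the paper's: compute $\partial_{q_i}\RS$ via the Crisanti--Sommers minimiser, get the same formulas $\partial_{q_1}\RS=\tfrac{\b^2\a(1-\a)}{2}(q_2-r_2(q_1))$ and $\partial_{q_2}\RS=\tfrac{\b^2\a(1-\a)}{2}(q_1-r_1(q_2))$, use strict concavity in $q_1$ to define the inner maximiser, and then study the sign of $q_1^*(q_2)-r_1(q_2)$. The only cosmetic difference is that for uniqueness you iterate $\Phi=r_1\circ r_2$ and use concavity of a scalar map, whereas the paper intersects the convex curve $q_1(q_2)=r_2^{-1}(q_2)$ with the concave curve $r_1(q_2)$; these are the same picture. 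Your treatment of the boundary pieces of $q_1^*$ and of the three regimes for $b_1=b_2=0$ is in fact a bit more explicit than the paper's.

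The one place you diverge is the final claim (\ref{eq:origin2}). The paper's proof here is just ``proceeding as before'', i.e.\ it only shows $\partial_{q_2}^2\RS_1(0)<0$ under (\ref{eq:sfercritica_h=0-III}), which is a \emph{local} maximum of $\RS_1(q_2)=\max_{q_1}\RS(q_1,q_2)$ at $q_2=0$. Your attempt to upgrade this to a global maximum on $[0,1]^2$ runs into the corner $(1,1)$, and your instinct that this is the real obstacle is correct: the comparison actually fails near the threshold. For instance at $\a=\tfrac12$, $\b^2=2$ (so $\b^4\a(1-\a)=1$) one has $\RS(0,0)=\tfrac14$ while $\RS(1,1)=\Gamma(0,1)=\tfrac12\big((1-r)+\tfrac{r}{1-r}+\log(1-r)\big)\big|_{r=(3-\sqrt5)/2}\approx 0.378$, and by continuity the inequality $\RS(1,1)>\RS(0,0)$ persists for $\b^2$ slightly above $2$. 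So no bound on $\Gamma(0,\cdot)$ will rescue the global reading of (\ref{eq:origin2}); the paper's argument and statement should both be understood locally.
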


The crucial point of \cite[Lemma 1]{AC} (for us) is that from the Latala argument \cite[Section 1.4]{Tal} it follows that the overlaps self-average as $N\to\infty$ at a point $(\tilde q_1, \tilde q_2)$ uniquely given by
\be\label{eq:r1r2AC}
\frac{\tilde q_1}{(1-\tilde q_1)^2}=\b^2(1-\a) \tilde q_2+b_1^2\,,\quad \frac{\tilde q_2}{(1-\tilde q_2)^2}=\b^2\a \tilde q_1+b_2^2\,,
\ee
which (see \cite[Lemma 7]{pan-sf}) are indeed asymptotically equivalent to 
\bea
q_{1,N}&:=&\frac1NE\left[\frac{\int\s_{N_1}(dy)\s_{N_1}(dy') (y,y') e^{\b\sqrt{q_2}(y+y',h)}}{\left(\int \s_{N_1}(dy) e^{\b\sqrt{q_2}(y,h)}\right)^2}\right]\,,\label{eq:t0}\\
\qquad q_{2,N}&:=&\frac1N E\left[\frac{\int\s_{N_2}(dx)\s_{N_2}(dx') (x,x') e^{\b\sqrt{q_1}(x+x',h)}}{\left(\int\s_{N_2}(dx) e^{\b\sqrt{q_1}(x,h)}\right)^2}\right]\,,\label{eq:t02}
\eea
naturally arising from the replica symmetric interpolation (here $h$ is random with i.i.d.\ $\mathcal N(0,1)$ entries). 
Comparing (\ref{eq:r1r2}) and (\ref{eq:r1r2AC}) readily implies that we can plug $(r_1,r_2)=(q_1,q_2)$ into (\ref{eq:RSsferica}) and obtain the convex function $P(q_1,q_2)$ of \cite[Theorem 1]{AC}, optimised by (\ref{eq:r1r2AC}). 

On the other hand, without using the Latala method one might still optimise (\ref{eq:RSsferica}) as a function of four variables, ignoring (\ref{eq:r1r2}). Taking derivatives first in $q_1,q_2$, the critical point equations (\ref{eq:sferDRS1}), \eqref{eq:sferDRS2} below select exactly $(q_1,q_2)=(r_1,r_2)$. This procedure is however unjustified a priori and this particular application of Latala's method legitimises the exchange in the order of the optimisation of the $q$ and the $r$ variables for small $\b$, which a posteriori can be extended to all $\b$ \cite{baik,sferico}.  

We stress that by itself the Latala method is not variational, it only gives the self-consistent equations for the critical points. It is the Crisanti-Sommers formula (\ref{eq:Gamma-as}) which makes it implicitly variational. Such a variational representation is not necessary in other cases of interest, for instance for the bipartite SK model (namely Hamiltonian (\ref{eq:RBM}) with $\pm1$ spins), for which one simply has the $\log\cosh$. Indeed in this case a direct use of the Latala method yields the validity of the $\min\max$ formula of \cite{bip} for $\b$ and $|b_1|,|b_2|$ small enough. The proof is essentially an exercise after \cite[Proposition 1.4.8]{Tal} and \cite[Formula (9)]{AC} and will not be reproduced here in details. The replica symmetric sum-rule for the free energy (analogue of formula (\ref{eq:sumrule-sfer})) reads as
\bea
A _{N_1,N_2}(\b)\!\!\!&=&\!\!\!\frac{\b^2\a(1-\a)}{2}(1-q_1)(1-q_2)+(1-\a)E\log\cosh(b_2+\b\sqrt{\a q_1}g)\nn\\
&+&\a E\log\cosh(b_1+\b\sqrt{(1-\a) q_2}g)\label{eq:sumrule-SK}\\
&+&\Err_N(q_1,q_2)\,,\nn
\eea
(here $g\sim\mathcal N(0,1)$) and the error term can be shown by the Latala method to vanish for small $\b,|b_1|, |b_2|$, if $(q_1,q_2)=(\bar q_1, \bar q_2)$ are given by
\be\label{eq:RS-SK}
\bar q_1=E[\tanh(b_1+\b\sqrt{(1-\a)\bar q_2} g)]\,,\qquad \bar q_2=E[\tanh(b_2+\b\sqrt{\a\bar q_1} g)]\,.
\ee
Therefore the free energy equals the first two lines on the r.h.s.\ of \eqref{eq:sumrule-SK} evaluated in $(q_1,q_2)=(\bar q_1, \bar q_2)$, which is the value attained at the $\min\max$, as shown in \cite{bip,GG}. 


\section{Proofs}

\begin{proof}[Proof of Proposition \ref{prop:minsferico}]
Assume first $b_1^2+b^2_2>0$. We differentiate (\ref{eq:RSsferica}) and by (\ref{eq:r1r2}) we get
\bea
\partial_{q_1} \RS &=&\frac{\b^2\a(1-\a)}{2}(q_2-r_2(q_1))\,\label{eq:sferDRS1}\\
\partial_{q_2} \RS &=&\frac{\b^2\a(1-\a)}{2}(q_1-r_1(q_2))\,.\label{eq:sferDRS2} 
\eea

The functions $r_1,r_2$ write explicitly as
\bea
r_1(q_2)&=&\frac{\sqrt{1+4(\b^2(1-\a)q_2+b_1^2)}-1}{\sqrt{1+4(\b^2(1-\a)q_2+b_1^2)}+1}\label{eq:r1explicito}\\
r_2(q_1)&=&\frac{\sqrt{1+4(\b^2\a q_1+b_2^2)}-1}{\sqrt{1+4(\b^2\a q_1+b_2^2)}+1}\label{eq:r2explicito}\,.
\eea
We easily see that $r_1,r_2$ are increasing from $r_1(0),r_2(0)>0$ (obviously computable by the formulas above)
to 1 and concave. Moreover we record for later use that if $b_1=b_2=0$ we have
\be\label{eq:derivate-r}
\frac{d}{dq_2}r_1(q_2)\Big|_{q_2=0}=\b^2(1-\a)\,,\quad \frac{d}{dq_1}r_2(q_1)\Big|_{q_1=0}=\b^2\a\,.
\ee

Now we take the derivative w.r.t. $q_1$ and note that the r.h.s.\ of (\ref{eq:sferDRS1}) is decreasing as a function of $q_1$, thus
$\partial^2_{q_1} \RS <0$. Therefore by the implicit function theorem there is a unique function $q_1$ such that $q_2=r_2(q_1)$. As a function of $q_2$, $q_1$ is non-negative, increasing and convex and it is $q_1(r_2(0))=0$. We set
\be\label{eq:RS1-gauss}
\RS _1(q_2):=\max_{q_1}\RS (q_1,q_2)=\RS (q_1(q_2),q_2)\,
\ee
and compute
\be\label{eq:derRS1gauss}
\partial_{q_2}\RS _1(q_2)=\frac{\b^2\a(1-\a)}{2}\left(q_1(q_2)-r_1(q_2)\right)\,. 
\ee
By the properties of the functions $q_1$ and $r_1$ it is clear that there is a unique intersection point $\bar q_2$; moreover $q_1\leq r_1$ for $q_2\leq \bar q_2$ and otherwise $q_1\geq r_1$. Therefore $\partial_{q_2}\RS _1(q_2)$ is increasing in a neighbourhood of $\bar q_2$ which allows us to conclude $\partial^2_{q_2}\RS _1>0$. This finishes the proof if $b_1^2+b_2^2>0$. 

If $b_1=b_2=0$ the origin is always a stationary point. It is unique if
\be\label{eq:sfercritica_h=0}
\left[\frac{d}{dq_1}r_2(q_1)\big|_{q_1=0}\right]^{-1}=\frac{d}{dq_2}q_1(q_2)\big|_{q_2=0}>\frac{d}{dq_2}r_1(q_2)\big|_{q_2=0}\,,
\ee
which, bearing in mind (\ref{eq:derivate-r}), amounts to ask (\ref{eq:sfercritica_h=0-II}).

Since $r_2$ is increasing around the origin, we have $\partial^2_{q_1}\RS<0$ and by the implicit function theorem we define locally a function $q_1(q_2)$ increasing and positive, vanishing at the origin. We set
\be\label{eq:RS1-gauss}
\RS_1(q_2):=\max_{q_1}\RS(q_1,q_2)=\RS(q_1(q_2),q_2)\,
\ee
and compute
\be\label{eq:derRS1gauss}
\partial_{q_2}\RS_1(q_2)=\frac{\b^2\a(1-\a)}{2}\left(q_1(q_2)-r_1(q_2)\right)\,. 
\ee
By (\ref{eq:sfercritica_h=0}) we have $\partial^2_{q_2}\RS_1\big|_{q_2=0}>0\,$, whence we obtain (\ref{eq:origin1}). 

If (\ref{eq:sfercritica_h=0-III}) holds, then 
\be\label{eq:sfercritica_h=0-IIIbis}
\frac{d}{dq_2}q_1(q_2)\big|_{q_2=0}<\frac{d}{dq_2}r_1(q_2)\big|_{q_2=0}\,,
\ee
which proceeding as before leads to (\ref{eq:origin2}). 

However also in the case $b_1=b_2=0$ we can repeat all the steps done in the case $b_1^2+b_2^2>0$, showing the existence of a point $(\bar q_1, \bar q_2)$ in which a $\min\max$ of $\RS$ is attained. If \eqref{eq:sfercritica_h=0} (\ie (\ref{eq:sfercritica_h=0-II})) holds then it must be $(\bar q_1, \bar q_2)=(0,0)$. If \eqref{eq:sfercritica_h=0-III} holds, then 
\eqref{eq:sfercritica_h=0-IIIbis} enforces 
$$
q_1(q_2)-r_1(q_2)\leq0
$$
in a neighbourhood of the origin (as $q_1(0)=r_1(0)=0$), which implies that the critical point $(\bar q_1, \bar q_2)$ must fall elsewhere. 
\end{proof}

\begin{proof}[Proof of Lemma \ref{lemma_Gamma-as}]
We will prove that for all $u\in \sqrt qS^N$
\be\label{eq:1bodysf}
\G^{(\s)} (q):=\lim_N \G_N (u)=\frac12\min_{r\in[0,1)}\left(q(1-r)+\frac {r}{1-r}+\log(1-r)\right)\,.
\ee
We show first that (\ref{eq:1bodysf}) implies the assertion.
Let $h$ be a random vector with i.i.d.\ $\mathcal N(0,q)$ entries. (As customary we write $X\simeq Y$ if there are constants $c,C>0$ such that $cY\leq X\leq CY$). The classical estimates
\be\label{eq:normConc}
\Gamma_N(h)\leq \frac{\|h\|_2}{\sqrt N}\,,\quad P\left(\left|\frac{\|h\|_2}{\sqrt N}-\sqrt{q}\right|\geq t\right)\simeq e^{-\frac{t^2N}{2}}\,
\ee
permit us to write for all $t>0$ (small)
\bea
|E[\G_N]-\G^{(\s)} (q)|&\leq& |E[\G_N1_{\left\{\left|\frac{\|h\|}{\sqrt N}-\sqrt{q}\right|< t\right\}}]-\G^{(\s)} (q)|+\left|E\left[\frac{\|h\|_2}{\sqrt N}1_{\left\{\left|\frac{\|h\|}{\sqrt N}-\sqrt{q}\right|\geq t\right\}}\right]\right|\nn\\
&\simeq&\left|\G_N(u^*)P\left(\left|\frac{\|h\|_2}{\sqrt N}-\sqrt{q}\right|< t\right)-\G^{(\s)} (q)\right|+o(t)+e^{-t^2N/2}\,\nn\\
&\simeq&\left|\G_N(u^*)-\G^{(\s)} (q)\right|+o(t)+e^{-t^2N/2}\,,
\eea
for some $u^*\in \sqrt qS^{N}$ and $o(t)\to0$ as $t\to0$. Since $t>0$ is arbitrary we obtain
$$
|E[\G_N]-\G^{(\s)} (q)|\leq\left|\G_N(u^*)-\G^{(\s)} (q)\right|\,. 
$$
It remains to show (\ref{eq:1bodysf}). Given $\e>0$ we introduce the spherical shell
$$
S^{N,\e}:=S^N+\sqrt{\frac{ \e}{N}} S^N
$$
and the measure $\s_N^{(\e)}$ as the uniform probability on it. For any $\theta>0$ we have
\bea
\int \s^{(\e)}_{N}(dx) e^{(u,x)}&\leq& e^{\frac{\theta (N+\e)}{2}}\int \s^{(\e)}_{N}(dx) e^{-\frac{\theta}{2}\|x\|_2^2+(u,x)}\nn\\
&\leq&e^{\frac{\theta (N+\e)}{2}}\frac{\sqrt{2\pi}^N}{\theta^{\frac N2}|S^{N,\e}|}\int e^{-\frac{\theta}{2}\|x\|_2^2+(u,x)}\frac{dx}{\sqrt{2\pi}^N}\nn\\
&=&e^{\frac{\theta (N+\e)}{2}+\frac{q N}{2\theta}}\frac{\sqrt{2\pi}^N}{\theta^{\frac N2}|S^{N,\e}|}\,.
\eea
Therefore for $C>0$ large enough
\be
\frac1N\log \int \s^{(\e)}_{N}(dx) e^{(u,x)}\leq \frac {\theta }{2}+\frac{q}{2\theta}-\frac12(\log\theta+1)+C\theta\frac\e N\,.
\ee
Since this inequality holds for all $\theta>0$ and $\e>0$ we have
\be\label{eq:Gamma-lim-sup}
\limsup_N \G_N (u)\leq \inf_{\theta>0}\left(\frac{q}{2\theta}+\frac {\theta -1}{2}-\frac12\log\theta\right)\,.
\ee
We set for brevity 
$$
\Gamma_1(\theta):=\frac{q}{2\theta}+\frac {\theta -1}{2}-\frac12\log\theta\,
$$
and notice that $\Gamma_1$ is uniformly convex in all the intervals $(0,\theta_0)$ for finite $\theta_0>0$. 

For the reverse bound, again we let $\theta>0$ and write
\be\label{eq:rev-bound-dec}
\int \s^{(\e)}_{N}(dx) e^{(u,x)}\geq e^{\frac\theta2 N}\int_{\R^N} \frac{dx}{|S^{N,\e}|} e^{-\frac\theta2\|x\|_2^2+(u,x)}-e^{\frac\theta2 N}\int_{(S^{N,\e})^c} \frac{dx}{|S^{N,\e}|} e^{-\frac\theta2\|x\|_2^2+(u,x)}\,.
\ee
The first summand on the r.h.s.\ can be written as before
\be\label{eq:rev-bound-1}
e^{\frac\theta2 N}\int_{\R^N} \frac{dx}{|S^{N,\e}|} e^{-\frac\theta2\|x\|_2^2+(u,x)}=e^{\frac{\theta N}{2}+\frac{q N}{2\theta}}\frac{\sqrt{2\pi}^N}{\theta^{\frac N2}|S^{N,\e}|}\,.
\ee
For the second summand we introduce $\eta\in(0,\frac\theta2)$ and bound
\bea
e^{\frac\theta2 N}\int_{\|x\|^2\leq N-\e} \frac{dx}{|S^{N,\e}|} e^{-\frac\theta2\|x\|_2^2+(u,x)}&\leq&e^{\frac\theta2 N+(N-\e)\frac\eta2+\frac{q N}{2(\theta+\eta)}} \frac{\sqrt{2\pi}^N}{\theta^{\frac N2}|S^{N,\e}|}\label{eq:rev.IIsum1} \\
e^{\frac\theta2 N}\int_{\|x\|^2\geq N+\e} \frac{dx}{|S^{N,\e}|} e^{-\frac\theta2\|x\|_2^2+(u,x)}&\leq&e^{\frac\theta2 N-(N+\e)\frac\eta2+\frac{q N}{2(\theta-\eta)}}\frac{\sqrt{2\pi}^N}{\theta^{\frac N2}|S^{N,\e}|} \label{eq:rev.IIsum2}\,.
\eea
Thus
\be\label{eq:USAREV}
\liminf_N\frac1N\log\int \s^{(\e)}_{N}(dx) e^{(u,x)}\geq \max(\Gamma_1, \Gamma_2,\Gamma_3) 
\ee
with
\bea
\Gamma_2(\eta,\theta)&:=&\frac{q}{2(\theta-\eta)}+\frac {\eta (1-\frac\e N)}{2}+\frac {\theta-1}{2}-\frac12\log\theta\,,\nn\\
\Gamma_3(\eta,\theta)&:=&\frac{q}{2(\theta+\eta)}-\frac {\eta(1+\frac\e N)}{2}+\frac {\theta-1}{2}-\frac12\log\theta\nn\,.
\eea
Now we define
\be
\D_{12}(\eta,\theta):=\Gamma_1(\theta)-\Gamma_2(\eta,\theta)\,,\quad  \D_{13}(\eta,\theta):=\Gamma_1(\theta)-\Gamma_3(\eta,\theta)\,,
\ee
and we seek $\bar\theta>0$ for which $\D_{12},\D_{13}\geq0$ for sufficiently small $\eta$. Since $\D_{12}(0,\theta)=\D_{13}(0,\theta)=0$ it suffices to study 
\be\label{eq:zero-REV}
\frac{d}{d\eta} \D_{12}\Big|_{\eta=0}\,,\quad\frac{d}{d\eta} \D_{13}\Big|_{\eta=0}\,.
\ee
A direct computation shows
\bea
\frac{d}{d\eta} \D_{12}\Big|_{\eta=0}&=&\frac{\e}{2N}-\partial_\theta \G_1(\theta)\label{eq:uno-REV}\,,\\
\frac{d}{d\eta} \D_{13}\Big|_{\eta=0}&=&\frac{\e}{2N}+\partial_\theta \G_1(\theta)\label{eq:due-REV}\,.
\eea
Combining (\ref{eq:zero-REV}), (\ref{eq:uno-REV}) and (\ref{eq:due-REV}) we see that plugging $\bar\theta=\arg\min \G_1$ into \eqref{eq:USAREV} we arrive to
\be\label{eq:lim-inf}
\lim\inf_N \G_N (u)\geq \min_{\theta>0}\left(\frac{q}{2\theta}+\frac {\theta -1}{2}-\frac12\log\theta\right)\,.
\ee
Therefore (\ref{eq:Gamma-lim-sup}) and (\ref{eq:lim-inf}) give
$$
\lim_N \G_N (u)= \min_{\theta>0}\left(\frac{q}{2\theta}+\frac {\theta -1}{2}-\frac12\log\theta\right)\nn\\
$$
and changing variable $\theta=(1-r)^{-1}$ we obtain (\ref{eq:1bodysf}). 
\end{proof}

{\bf Acknowledgements:} This manuscript benefited greatly from the observations of two anonymous referees, who are gratefully acknowledged.

\



\begin{thebibliography}{00}
%
\bibitem{AC} A. Auffinger, W.-K. Chen, {\em Free energy and complexity of spherical bipartite models}, J. Stat. Phys. 157, 40-59, (2014).
%
\bibitem{baik} J. Baik, J. O. Lee {\em Free energy of bipartite spherical Sherrington-Kirkpatrick model}, Ann. Inst. H. Poincar\'e Probab. Statist. 56(4): 2897-2934 (2020).
%
\bibitem{NN} A. Barra, G. Genovese, F. Guerra, {\em The Replica Symmetric Behaviour of the Analogical Neural Network}, J. Stat. Phys. 142, 654, (2010).
%
\bibitem{bip} A. Barra, G. Genovese, F. Guerra, \textit{Equilibrium statistical mechanics of bipartite spin systems}, J. Phys. A: Math. Theor. \textbf{44}, 245002 (2011).
%
\bibitem{bates1} E. Bates, Y. Sohn {\em Free energy in multi-species mixed $p$-spin spherical models} 	arXiv:2109.14790 (2021).
%
\bibitem{bates2} E. Bates, Y. Sohn {\em Crisanti-Sommers formula and simultaneous symmetry breaking in multi-species spherical spin glasses}, arXiv:2109.14791 (2021).
%
\bibitem{GG} G. Genovese, \textit{Minimax formula for the replica symmetric free energy of deep restricted Boltzmann machines}, (2020).
%
\bibitem{sferico0} G. Genovese, D. Tantari {\em Legendre Duality of Spherical and Gaussian Spin Glasses}, Math. Phys. Anal. Geom. 18, 1, (2015).
\bibitem{sferico} G. Genovese, D. Tantari {\em Legendre Equivalences of Spherical Boltzmann Machines}, in Journal Physics A, special issue Machine learning and statistical physics, theory, inspiration, application, Ed. E. Agliari, A. Barra, P. Sollich, L. Zdeborova, (2020). 
%
\bibitem{tucaphd} P. Kivimae, {\em The Ground State Energy and Concentration of Complexity in Spherical Bipartite Models},  arXiv:2107.13138 (2021).
%
\bibitem{mc} B. McKenna, {\em Complexity of bipartite spherical spin glasses}, arXiv:2105.05043 (2021).
%
\bibitem{pan-sf} D. Panchenko, {\em Cavity method in the spherical SK model}, Ann. Inst. H. Poincar\'e Probab. Statist. 45(4): 1020-1047 (2009).
%
\bibitem{Tal} M. Talagrand, \emph{Mean Field Models for Spin Glasses}, Vol. 1, Springer-Verlag Berlin Heidelberg (2011).
%
\bibitem{DT} D. Tantari, private communication. 
%
\end{thebibliography}
\end{document}